\newtheorem{thm}{Theorem}
\newtheorem{lem}{Lemma}
\newtheorem{que}{Question}
\title{A note on minimal art galleries}
\newcommand{\con}{\ensuremath{\mathrm{Conv}\left(F\right)}}
\author{
  Eryk Lipka\\
  Zaremba Society of Mathematicians-Students of the Jagiellonian University\\
  Institute of Mathematics of the Pedagogical University of Cracow\\
  Institute of Computer Science and Computer Mathematics of the Jagiellonian University\\
  \texttt{eryklipka0@gmail.com}
}
\begin{document}
\maketitle

\begin{abstract}
We will consider some extensions of the polygonal art gallery problem. In a recent paper Morrison \cite{RM} has shown the smallest (9 sides) example of an art gallery that cannot be observed by guards placed in every third corner. Author also mentioned two related problems, for which the minimal examples are not known. We will show that a polygonal fortress such that its exterior cannot be guarded by sentries placed in every second vertex has at least 12 sides. Also, we will show an example of three-dimensional polyhedron such that its inside cannot be covered by placing guard in every vertex which has both fewer vertices and faces than previously known.\\
MSC2010: 97N70
\end{abstract}


\section{Introduction}
Original art gallery problem is posed as following: given a polygon with $n$ sides choose $x$ points called guards inside it such that any point of polygon can be observed by at least one guard (precisely, for any $p$ in the polygon there exists guard $q$ such that the line segment $\overline{pq}$ is contained in the polygon). It has been proven by Chv\'atal that in general $x=\lfloor n / 3 \rfloor$ guards is enough and there exist galleries for which this limit cannot be lowered. Later, Fisk proved that guarding with $\lfloor n /3 \rfloor$ can be achieved by placing guards only in vertices of polygon. However simply placing guard in every third vertex is not always successful strategy and Morrison \cite{RM} showed that minimal example for which this strategy does not work has 9 vertices.

We will be first considering the fortress problem: given a polygon with $n$ sides choose $x$ points called guards inside it such that for any point $p$ outside of polygon there exists guard $q$ such that the line segment $\overline{pq}$ is outside the polygon. It has been proven by O'Rourke and Wood \cite{JO} that $\lceil n / 2 \rceil$ guards (placed in vertices) suffice and are sometimes necessary to fulfill this task. Our goal is to prove that a minimal example that cannot be guarded with the simple strategy of placing guard in every second vertex is 12-sided.

Second problem we will address is the three dimensional art gallery problem: given a polyhedron choose points called guards inside it such that for any point $p$ in the polyhedron there exists guard $q$ such that the line segment $\overline{pq}$ is contained in the polyhedron. In contrast to the previously mentioned problems placing guards in vertices is not optimal strategy, there are known examples of polyhedra which are not guarded even when guard is placed at every vertex, notably the Octoplex \cite{TSM} with 56 vertices and 30 faces. We present an example having 24 vertices and 26 faces, however we weren't able to prove that it is minimal.

In the whole paper by writing "$A$ is visible from $B$" we mean that segment $\overline{AB}$ does not intersect border of discussed polytope, so it is either fully contained in interior and border of polytope (when talking about art galleries) or in border and complement of polytope (when talking about fortresses). In particular this means that segment can "touch" border without intersecting it.

By "$A$ is observed" we mean that there exists guard for whom $A$ is visible. \con\ denotes convex hull of $F$.

\section{Fortress}
\begin{lem} \label{convex}
Let $F$ be $n$-gon with guard placed at every second vertex. If there is point outside $F$ which cannot be observed by any guard then this point is not visible from any point outside of \con.
\end{lem}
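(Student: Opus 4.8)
The plan is to prove the contrapositive: if $p$ is outside $F$ and is visible from some point outside $\con$, then $p$ is observed by a guard. First I would replace the external witness by a ray: if $q \notin \con$ and $\overline{pq}$ misses $\partial F$, then, since $\con$ is convex, the prolongation of $\overline{pq}$ beyond $q$ never re-enters $\con$ and hence never meets $\partial F\subseteq\con$, so there is a ray from $p$ that runs to infinity without touching $\partial F$; conversely such a ray eventually leaves $\con$, producing an external point from which $p$ is visible. So the hypothesis becomes: there is a ray from $p$ disjoint from $\partial F$, which (as $p \notin F$) then lies entirely in the exterior of $F$. This reformulation handles uniformly the two cases $p \notin \con$ and $p$ inside a pocket.

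The core would be the claim: \emph{if $p \notin F$ and some ray from $p$ avoids $\partial F$, then $p$ is visible from both endpoints of some edge of $F$.} This suffices, since under the "every second vertex" placement every edge of $F$ has at least one endpoint carrying a guard, who then observes $p$. To prove the claim I would rotate the ray about $p$, say counterclockwise, and follow the point $x(\theta)$ at which the rotated ray $R_\theta$ first meets $\partial F$ (whenever it meets $F$); this $x(\theta)$ is always visible from $p$. Because the initial ray avoids $F$, there is a first direction $\theta_1$ at which $R_{\theta_1}$ touches $F$: either along a whole edge $e$, in which case $\overline{pe}$ is clear and $p$ sees both endpoints of $e$, or tangentially at a single point, which must be a convex vertex $v$, with $\overline{pv}$ clear.

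In the second case, as $\theta$ increases a little, $x(\theta)$ slides away from $v$ along one incident edge $\overline{vw}$, every point passed being visible from $p$. If $x(\theta)$ reaches $w$, then $p$ sees the consecutive pair $\{v,w\}$. Otherwise $x(\theta)$ is occluded earlier, and an occlusion can happen only in two ways: a whole edge $e'$ of $F$ swings in front edge-on — so $\overline{pe'}$ is clear and $p$ sees both endpoints of $e'$ — or a convex vertex $u$ of $F$ swings in front tangentially, in which case $p$ sees $u$ and the situation repeats with $v$ replaced by $u$. Only finitely many directions make $R_\theta$ tangent to $F$ at a vertex, so the second kind of occlusion can occur only finitely often; hence the process must terminate with an edge seen edge-on, or with $x(\theta)$ traversing a whole incident edge, and in every terminal case $p$ is visible from two consecutive vertices.

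The hard part will be this occlusion analysis: classifying exactly how $x(\theta)$ can jump, excluding jumps that land strictly inside an edge without exposing a full edge, dealing with the degeneracies (rays edge-on to edges, a vertex grazed only momentarily, coincident occlusions), and pinning down termination — in particular checking that the last direction in which $R_\theta$ meets $F$ before escaping again is itself a vertex tangency, so that the incident edge ending there gets traversed in full whenever no earlier occlusion of the first two kinds has intervened.
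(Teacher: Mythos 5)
Your proposal is correct in outline, and it reaches the lemma by a genuinely different route than the paper. The paper argues by contradiction and a descent driven by the guard placement: it takes an extreme vertex $B$ of $F$ as seen from the unobserved point $X$, notes that $B$ is visible hence unguarded, so its neighbour $C$ carries a guard and must be invisible, so some vertex $B'$ inside the triangle $BXC$ (the one minimizing $\angle B'XB$) blocks $C$ and is itself visible, hence unguarded; iterating yields infinitely many distinct unguarded vertices, a contradiction with $F$ having only $n$ vertices. You instead isolate a purely geometric statement --- an exterior point with a free ray to infinity sees both endpoints of some edge --- and prove it by sweeping the ray and tracking the first hit point; the guard placement enters exactly once, through the (correct, also for odd $n$) observation that every edge has a guarded endpoint. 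The geometric engine is essentially the same (your occluding convex vertex $u$ plays the role of the paper's $B'$, both being the nearest new tangential contact), but your packaging proves a stronger, guard-free fact and terminates constructively by exhausting vertex-tangency angles rather than by pigeonhole on unguarded vertices. The occlusion analysis you flag as the hard part is real, but it is no more delicate than the step the paper itself compresses into ``there cannot exist any edge hiding it.'' Two small repairs: under the paper's visibility convention a witnessing segment may \emph{touch} $\partial F$, so your free ray should only be required to avoid the interior of $F$ (the sweep then simply begins with tangential contacts already present at the initial angle); and in the simultaneous case where the slide reaches the far endpoint $w$ exactly when an occluder appears, a limiting argument still shows $\overline{pw}$ avoids the interior, so $p$ sees the adjacent pair and the terminal cases close up.
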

\begin{proof}
Assume that there exists point $X$ that is not observed by any guard, yet it is visible from outside of \con. This means that we can pick two vertices $A,B$ such that whole $F$ is inside angle $\angle AXB$. Because $X$ is not observed there are no guards in $A$ nor $B$, hence $\overline{AB}$ is not edge of $F$. Let $C,D$ denote the vertices of $F$ such that $\overline{BC},\overline{BD}$ are edges of $F$, without loss of generality $\angle XBC < \angle XBD$. As guards are placed in every second vertex there must be a guard in $C$.
Find a vertex $B'\neq B$ of $F$ that lies inside $\triangle BXC$ such that $\angle B'XB$ is minimal. There are no edges intersecting $\overline{BX}$ or $\overline{BC}$ so $B'$ must be visible from $X$ as there cannot exist any edge hiding it. So $B'$ has no guard. We can repeat this process using $B'$ instead of $B$ and we will get an infinite sequence of different vertices of $F$ each without a guard. This is a contradiction as $F$ has only $n$ vertices. 
\begin{figure}[h]
    \centering
    \includegraphics[width=0.5\textwidth]{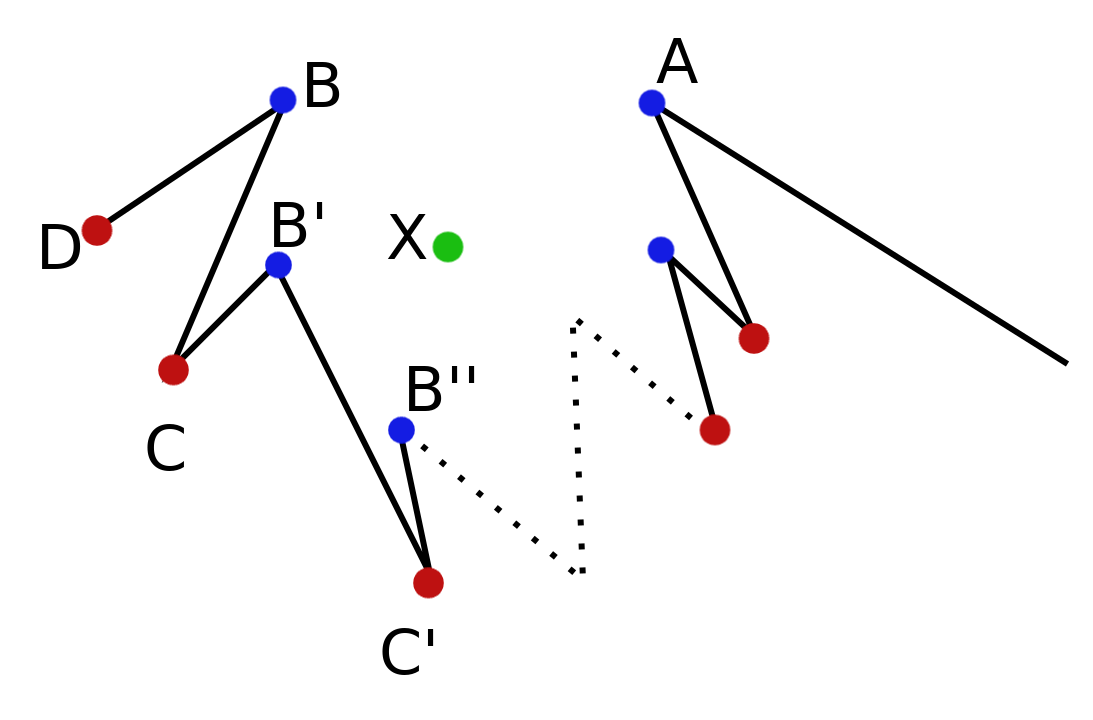}
    \caption{Red dots are guards. Going from $B$ after each guard there must be a segment hiding it from $X$. }
    \label{fig:lemat}
\end{figure}

\end{proof}

\begin{thm}
Let $F$ be $n$-gon with $n<12$ then for some choice of starting vertex, placing guard on every second vertex all points outside $F$ are observed. 
\end{thm}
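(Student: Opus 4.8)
The plan is to use Lemma~\ref{convex} to reduce the theorem to a statement about the \emph{pockets} of $F$, and then to run a short case analysis controlled by the hypothesis $n<12$.

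First I would dispose of the exterior of the convex hull. If a point $X$ lies outside $\con$, then $X$ is visible from every point of a sufficiently small neighbourhood of $X$, and such points also lie outside $\con$; the contrapositive of Lemma~\ref{convex} then shows that $X$ is observed, for \emph{any} choice of starting vertex. Hence the only points that can fail to be observed lie in $\con\setminus F$, and this region decomposes into finitely many pockets: a pocket is a simple polygon $\Pi$ bounded by a chain $V_0V_1\cdots V_k$ of consecutive vertices of $F$ together with a single ``lid'' edge $\overline{V_0V_k}$ which is an edge of $\con$ but not of $F$ (so $k\ge 2$). The interior vertices $V_1,\dots,V_{k-1}$ are exactly the non-hull vertices of $F$ lying in that pocket; distinct pockets have disjoint interiors and disjoint sets of interior vertices, and two pockets share at most one (hull) vertex. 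Writing $h$ for the number of hull vertices one gets $n=h+\sum_j(k_j-1)$ with each $k_j\ge 2$ and at most $h$ pockets, so in particular the largest pocket of an $11$-gon has at most $10$ vertices.

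Next I would pin down how a point $X\in\Pi$ can be observed. The segment from $X$ to a guard cannot cross the chain $V_0\cdots V_k$ (that is part of $\partial F$), and cannot cross the relative interior of the lid (immediately on its far side one is inside $F$); so, apart from ``around-the-corner'' sightlines running through the endpoints $V_0$ or $V_k$ — which only add observers and provide extra slack — the point $X$ is observed precisely when some guard sitting at a vertex of $\Pi$ sees $X$ inside $\Pi$. It therefore suffices to choose the starting vertex so that, in every pocket, the alternating subset of $\{V_0,\dots,V_k\}$ that carries guards is a guarding set of the polygon $\Pi$. Here I would record the elementary facts that $V_0$ and $V_k$ are convex vertices of $\Pi$ (all of $F$ lies on one side of the lid line), that any triangular or quadrilateral pocket is guarded by a single one of its vertices, and — the point that kills the naive obstructions such as simple combs — that every run of two or more consecutive vertices of $F$ meets both parity classes, so no single deep feature of a pocket can by itself force a particular starting parity.

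The remaining work, and the main obstacle, is the finite analysis of which pocket configurations occur when $n\le 11$. The argument splits: either a single pocket is responsible for both phases failing, or two different pockets fail for the two phases respectively. In the first case I must show that a pocket with at most $10$ vertices cannot fail to be guarded by \emph{both} alternating vertex subsets; this is the genuinely hard step, requiring a study of visibility inside comb- and spiral-type sub-polygons and a check of the finitely many borderline shapes, and I expect nearly all of the effort to land here (the around-the-corner sightlines should absorb the cases that are tight). In the second case the vertex identity $n=h+\sum_j(k_j-1)$, together with the bound of at most $h$ pockets and at most one shared vertex per pair, forces the two pockets to be small enough that a direct check of the handful of tight configurations settles it. Combining the two cases produces a starting vertex for which every pocket, and hence every point outside $F$, is observed. (Whether $12$ can be lowered — equivalently, whether a bad $12$-gon exists — is a separate matter not addressed here.)
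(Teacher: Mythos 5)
Your opening reduction is sound and matches the paper's first step: by Lemma~\ref{convex} an unobserved point cannot lie outside \con\ (it would be visible from nearby exterior points), so any failure is confined to the pockets, i.e.\ the compact components of the complement of $F$ inside \con, and a point there must be visible from at least three vertices. But from that point on your proposal is a plan rather than a proof: you explicitly defer ``the genuinely hard step'' to ``a study of visibility inside comb- and spiral-type sub-polygons and a check of the finitely many borderline shapes.'' That step is the entire content of the theorem, and it is not actually a finite check --- pockets with up to ten vertices form a continuum of shapes, and nothing in your setup reduces them to finitely many combinatorial types whose guardability by \emph{both} alternating vertex classes could be verified one by one. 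As written, the argument has a gap exactly where the bound $12$ has to come from.

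The paper closes this gap with a counting argument that needs no shape analysis. Fix a hull vertex $O$, two-colour the vertices alternately from $O$, and take the two bad points: $X$ seen only by blue vertices $a_1,\dots,a_p$ and $Y$ seen only by red vertices $c_1,\dots,c_q$, with $p,q\ge 3$ by the reduction above; let $b_i$, $d_i$ be the clockwise successors of the $a_i$, $c_i$. If $X$ and $Y$ sit in different pockets the four sequences are pairwise disjoint and $n\ge 12$ at once (the honest version of your two-pocket case). If they share a pocket, the key step is that $\left(b_i\right)$ and $\left(c_i\right)$ can have at most one common element: two common elements would produce, by taking the vertex $C$ minimising $\angle XYC$ in the relevant range, a vertex visible from both $X$ and $Y$, which is impossible since every observer of $X$ is blue and every observer of $Y$ is red. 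Hence the four sequences contribute at least $2(p+q-1)\ge 10$ distinct vertices lying in one pocket, and since $b_p$ and $d_q$ cannot both lie on the boundary of \con, at least two hull vertices avoid all four sequences, giving $n\ge 12$. You would need to supply an argument of this kind (or genuinely establish and carry out the finiteness of your case analysis) for the proposal to become a proof.
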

\begin{proof}
Lets assume that we have a polygon $F$ with fewer than 12 sides such that for any choice of initial vertex, placing guard every second vertex there will always be not observed point outside of $F$.

Pick point $O$ which is vertex of both $F$ and \con. We color vertices of $F$ with two alternating colors (red and blue) starting with red $O$ and going clockwise. If we place guards in every second vertex starting with $O$ then from our hypothesis we can find point $X$, which is visible only from blue vertices. If we place guards starting with the first vertex after $O$ then we can find point $Y$ which is visible only from red vertices (excluding $O$ if $n$ is odd).

Starting clockwise from $O$ we create the sequence $a_1,a_2,\ldots,a_p$ of blue vertices from which $X$ is visible, let $b_i$ be the first vertex after $a_i$. Again going clockwise from $O$ create the sequence $c_1,c_2,\ldots,c_q$ of red vertices from which $Y$ is visible and let $d_i$ be the first vertex after $c_i$. From Lemma \ref{convex} $X,Y$ are inside \con\ so they are visible from at least three vertices. Notice that all $a_i$ must lay on the border of one compact component of $F \backslash \con$; also from Lemma \ref{convex} $a_i$ cannot be on the border of \con\ or $X$ would be visible from outside of \con. Same reasoning applies to $c_i$, so only two points from those four sequences that may belong to the border of \con\ are $b_p$ and $d_q$. Even if $b_p$ is on the border of \con, it is still in the same compact component as $a_p$ and same applies to $d_q$ and $c_q$.

Notice that if $X,Y$ are in different compact components of $F \backslash \con$ (figure \ref{fig:two}), then all four sequences are disjoint giving $F$ at least 12 vertices, so from now on we can assume that $X,Y$ are in the same compact component of $F \backslash \con$. This means that $b_p,d_q$ cannot simultaneously be on the border of \con so there must be at least two vertices of \con\ which do not belong to any of four sequences.
\newpage

\begin{figure}
    \centering
    \includegraphics[width=0.5\textwidth]{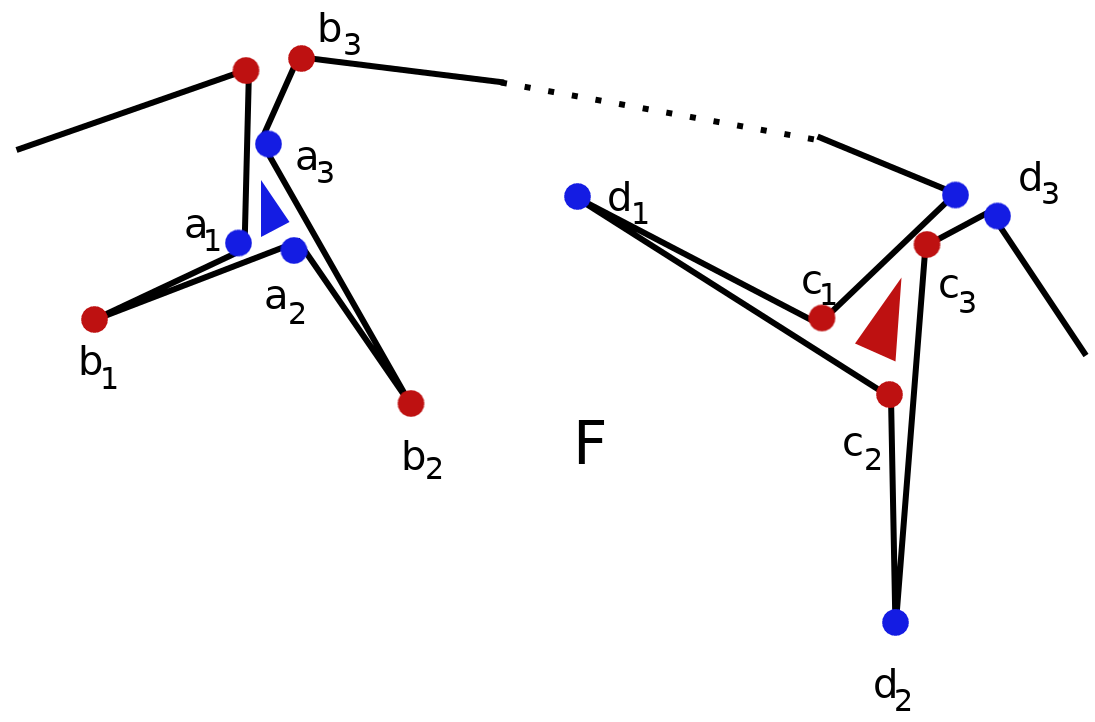}
    \caption{Example of configuration with $X$ and $Y$ being in different compact components.}
    \label{fig:two}
\end{figure}

Next step will be proving that $\left(b_i\right)$ and $\left(c_i\right)$ have at most one common element (and the same is true for $\left(a_i\right)$ and $\left(d_i\right)$). So assume we have $j<k$ such that $b_j,b_k$ belong to sequence $\left(c_i\right)$, which means that $Y$ is visible from them.
There are two possibilities depicted on figure \ref{fig:xycabab}.
In both cases, if we pick vertex $C$ between $b_j$ and $a_k$ (inclusively) such that $\angle XYC$ is minimal then such vertex is visible both from $X$ and $Y$ (in first case $C=b_j$).

\begin{figure}[h]
    \centering
    \includegraphics[width=0.8\textwidth]{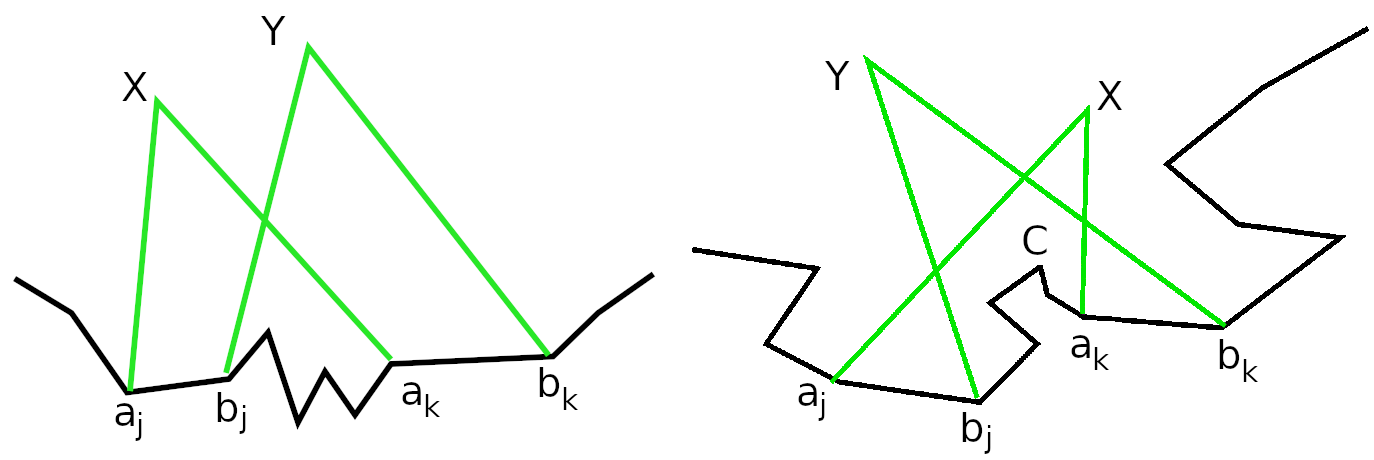}
    \caption{Two possible configurations when $\left(b_i\right)$ and $\left(c_i\right)$ have more than one common element. Green lines are segments that do not intersect with border of $F$.}
    \label{fig:xycabab}
\end{figure}

Hence, set $\left\{a_1,\ldots,a_p,b_1,\ldots,b_p,c_1,\ldots,c_q,d_1,\ldots,d_q \right\}$ has at least $2\cdot(p+q-1)$ elements. Because $p,q \ge 3$ we get at least 10 different vertices of $F$ in one connected component of $F \backslash \con$. As we mentioned earlier there must be at least two vertices that are not elements of those sequences, so $n \ge 12$.
\end{proof}

\newpage
\begin{thm}
There exists 12-gon that for any choice of starting vertex, placing guard on every second vertex some point outside $F$ is not observed. 
\end{thm}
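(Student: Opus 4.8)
The plan is to exhibit an explicit $12$-gon $F$ together with two exterior points $X$ and $Y$, and to verify by direct inspection that $X$ is unobserved for one of the two alternating guard placements and $Y$ for the other. Note that on $12$ vertices the ``guard at every second vertex'' strategy has only two possible outcomes up to the choice of starting vertex, namely the two colour classes of the proper $2$-colouring of the vertices along $\partial F$; so it suffices to produce one point hidden from all guards in the first class and one point hidden from all guards in the second.

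Guided by Lemma~\ref{convex} and by the counting in the previous proof, I would look for $F$ whose convex hull is as small as possible and in which all the relevant structure sits in a single bounded component of $F\setminus\con$. Concretely, I would take $F$ to be a triangle with one side replaced by a long, narrow, zigzagging chain through the other vertices, so that the region trapped between this chain and the missing side is one deep ``fjord'', exterior to $F$ but interior to $\con$. I would then choose explicit coordinates making the chain wind back and forth enough that a point $X$ placed deep inside the fjord can see only a small set of nearby wall vertices --- the teeth of the zigzag blocking it from the two far hull vertices, from the mouth of the fjord, and from the rest of the wall --- and so that this visible set lies entirely in one colour class; then $X$ is not observed when the guards occupy the other class. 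Choosing a second point $Y$ deep in the fjord whose visible set lies in the complementary class handles the other placement. (If one can make $F$ invariant under a reflection that interchanges the two colour classes, $Y$ may be taken to be the image of $X$, but in general one simply repeats the analysis.)

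The verification is then a finite check: for each of the $12$ vertices $V$ one decides whether $\overline{XV}$ meets the interior of $F$, which for explicit coordinates is a handful of segment--segment intersection tests, and likewise for $Y$. Since every choice of starting vertex realises one of the two colour classes, exhibiting $X$ hidden from all guards in one class and $Y$ hidden from all guards in the other proves the theorem. The main obstacle is the first step: the zigzag wall must be shaped so that it \emph{simultaneously} hides $X$ from one class, hides $Y$ from the other, and leaves $F$ simple with a small convex hull, and these demands compete, so the coordinates must be pinned down carefully; a picture makes the construction transparent, but the real work is choosing the numbers so that every one of the required visibilities and occlusions genuinely holds.
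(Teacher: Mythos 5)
Your overall strategy is exactly the paper's: the theorem is proved by exhibiting one concrete 12-gon (the paper's ``Leszek-the-dog-fortress'') together with two exterior regions, one unobserved by each of the two alternating guard placements, and your observation that for even $n$ there are only two placements up to starting vertex --- so two bad points suffice --- is the same reduction the paper makes. The difficulty is that your proposal never actually produces the object whose existence is being asserted. Everything is phrased in the conditional (``I would take\ldots'', ``I would then choose explicit coordinates\ldots''), and you yourself identify the genuine content of the theorem --- choosing coordinates so that every required occlusion and visibility holds simultaneously, while $F$ stays simple --- as the step you have not carried out. For an existence theorem proved by construction, that step \emph{is} the proof; a template for a construction, however plausible, does not establish that the competing constraints you list can in fact all be met. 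Note also that the preceding theorem shows these constraints are extremely tight at $n=12$: the two visible sets must each have exactly three vertices, the associated neighbour sequences can share at most one vertex, and only the allowed hull vertices may lie outside the pockets, so it is not a priori obvious that a realization exists, and nothing in your text rules out the possibility that the ``fjord'' shape you describe cannot be closed up into a simple 12-gon.

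To complete the argument you must do what the paper does: write down twelve explicit vertices (or give an unambiguous figure), exhibit the two points $X$ and $Y$, and perform the finite segment-intersection check you describe, confirming that $X$ sees only vertices of one colour class and $Y$ only vertices of the other. Until that verification is actually carried out on a specific polygon, the existence claim remains unproved.
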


\begin{proof}
The Leszek-the-dog-fortress\footnote{As pointed out by P. Miska this shape resembles one of cartoon characters.} is an example of such polygon (figure \ref{fig:12gon}), red area is visible only from red vertices, blue area is visible only from blue vertices.
\begin{figure}[h]
    \centering
    \includegraphics{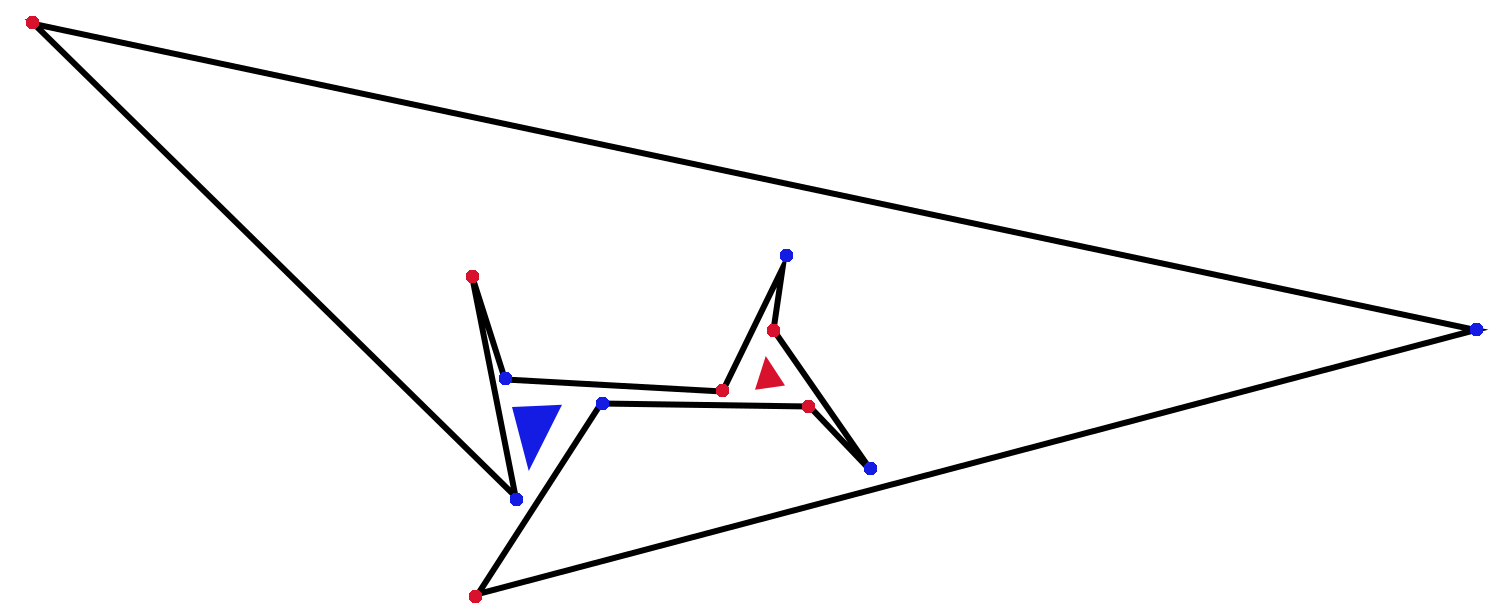}
    \caption{Leszek-the-dog-fortress. Example of 12 sided fortress that cannot be guarded by placing guard in every second vertex.}
    \label{fig:12gon}
\end{figure}

It is worth noting that this fortress can be guarded by 4 guards.
\end{proof}

There is a big difference in placing guards at every second vertex depending if $n$ is even or odd. For even $n$ there are only two ways we can place guards, so for this strategy to fail we need only two "hard to observe" points outside $F$. However, when $n$ is odd there are exactly $n$ different ways to place the guards, and there is always one edge with guards on both ends. Hence the question:
\begin{que}
What is the smallest odd $n$ such that there exists $n$-gon that for any choice of starting vertex, placing guard on every second vertex some point outside is not observed?
\end{que}
Smallest example we could find (figure \ref{fig:odd}) has $n=21$ and we weren't able to prove it is minimal. In fact it is just two copies of previous example connected together, so for at least one of the copies the strategy fails depending on where the starting vertex is.
\begin{figure}[h]
    \centering
    \includegraphics[width=0.5\textwidth]{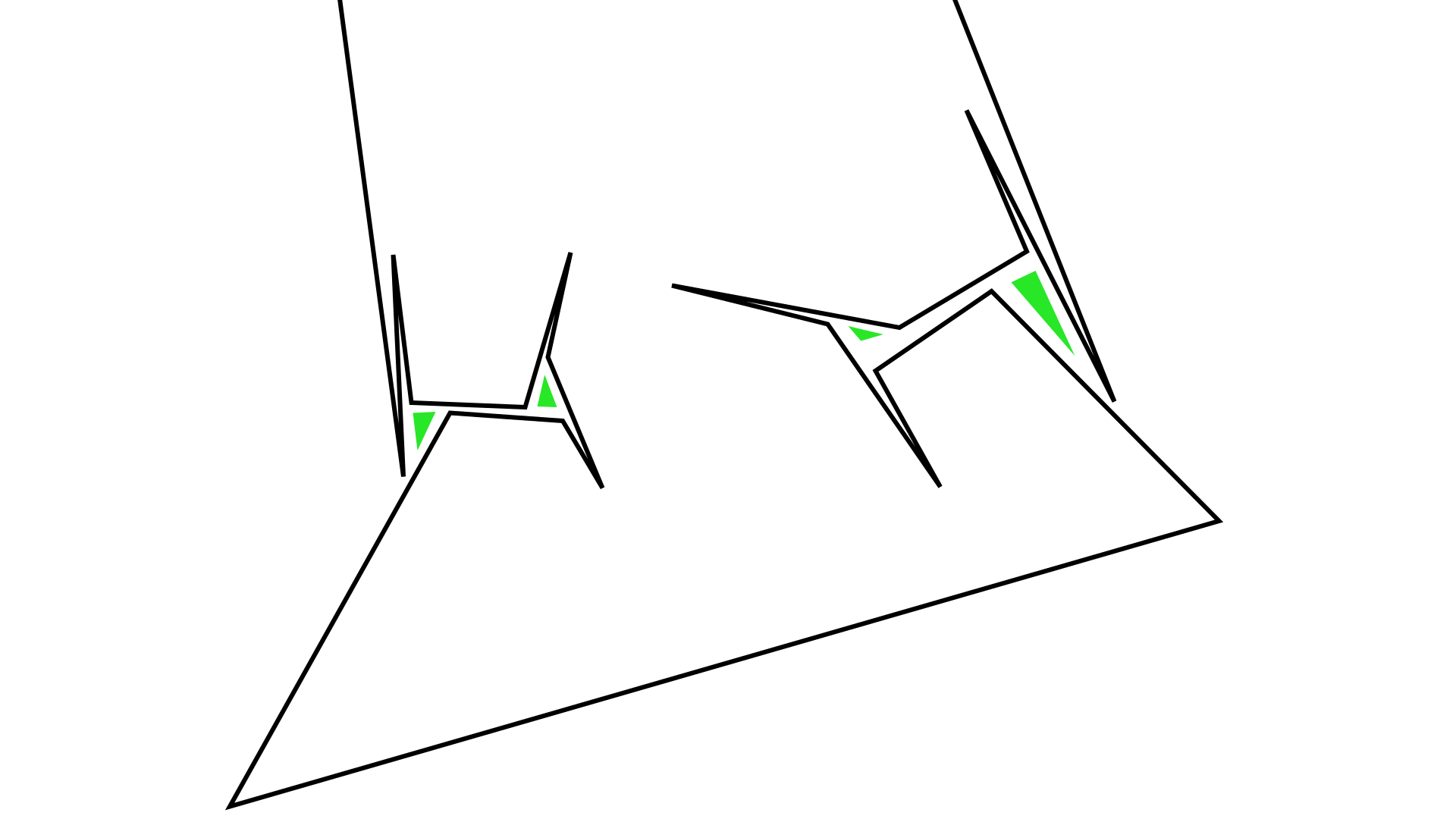}
    \caption{Two connected Leszek-the-dog-fortresses form a 21-gon which cannot be guarded by placing guard every second vertex. For any choice of starting vertex at least one green area will not be observed.}
    \label{fig:odd}
\end{figure}

\section{Three-dimensional gallery}
In this section we will be considering guards observing the interior of a polyhedron. Main difference and our focus will be the fact that some such galleries are not entirely observed even when guard is placed at every vertex. One known example is the Octoplex. 

\begin{figure}[h]
    \centering
    \includegraphics[width=0.4\textwidth]{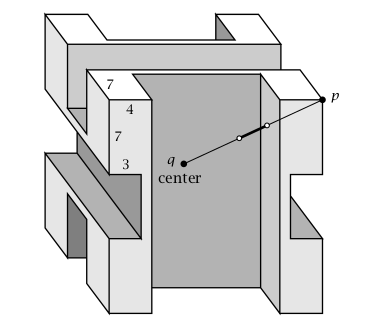}
    \includegraphics[width=0.4\textwidth]{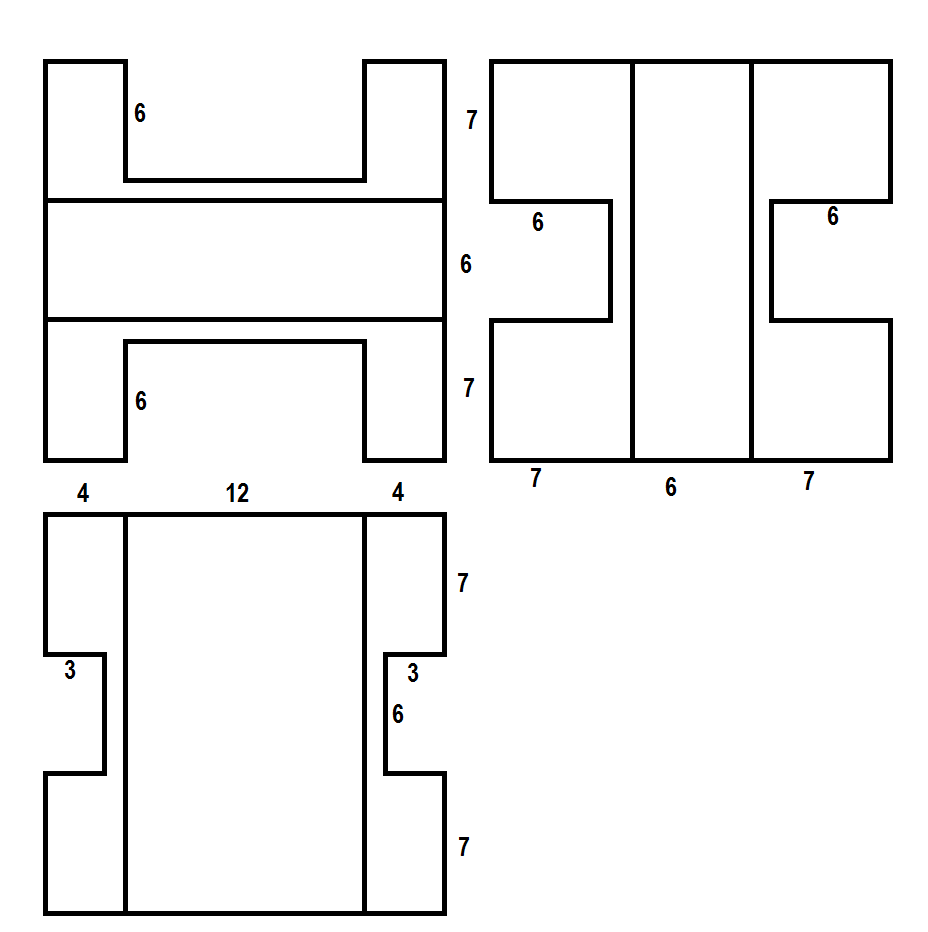}
    \caption{The Octoplex. Isometric projection is figure 3.19 in \cite{TSM}.}
    \label{fig:octo}
\end{figure}

This polyhedron has 56 vertices and 30 faces, it is constructed from 20-by-20-by-20 cube by removing six rectangular cuboids of varying sizes, and center $q$ cannot be observed from any vertex, for details see \cite{TSM}. In first attempt of finding a smaller polyhedron with the same property we tried to modify the Octoplex, we noticed that there are four pairs of faces, each pair sharing an edge, that are completely invisible from $q$. By cutting out additional parts from the cube we obtained the "truncated Octoplex" (figure \ref{fig:tocto}).

\begin{figure}[h]
    \centering
    \includegraphics[width=0.5\textwidth]{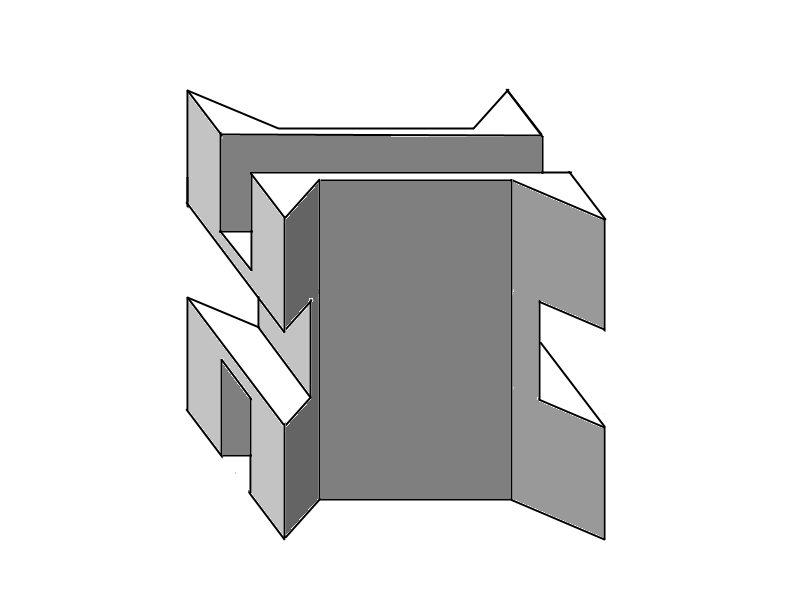}
    \includegraphics[width=0.4\textwidth]{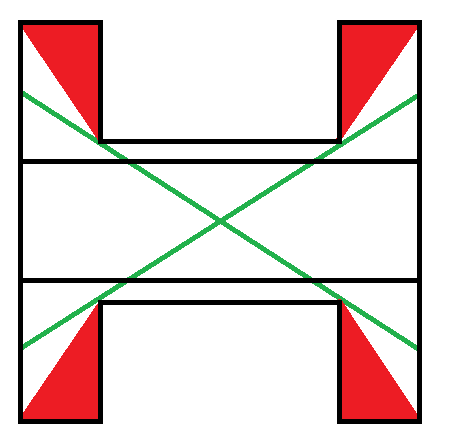}
    \caption{Truncated Octoplex and its orthogonal projection. Green lines show that some parts of polyhedra are hidden behind front/back rectangular faces, so we can cut the red part out without risk of creating any new vertices in the area visible from the center.}
    \label{fig:tocto}
\end{figure}

This polyhedron has 48 vertices and 26 faces; we removed 4 faces and 8 vertices, also 8 other vertices have been slightly moved, however it is easy to check they are still in a blind spot. Further attempts to truncate the Octoplex were unsuccessful because of lack of symmetry, the cutouts from the cube have different width and height, and repeating same operation for other sides will cause 8 moved vertices to be visible from center.

To fix the symmetries we took "regular" Octoplex, with cutouts of the same size. Problem with such shape is that the corners of cube are visible from its center. However we first performed partial truncation to get rid of all 8-sided faces which have produced lots of additional vertices, and as the last step we sliced out corners of cube, leaving nice triangular faces in place of the corners. We called this new polytope an \"Uberoctoplex \footnote{Thanks to K. Łasocha for winning idea.} (figure \ref{fig:uber}).

\begin{figure}[h]
    \centering
    \includegraphics[width=0.4\textwidth]{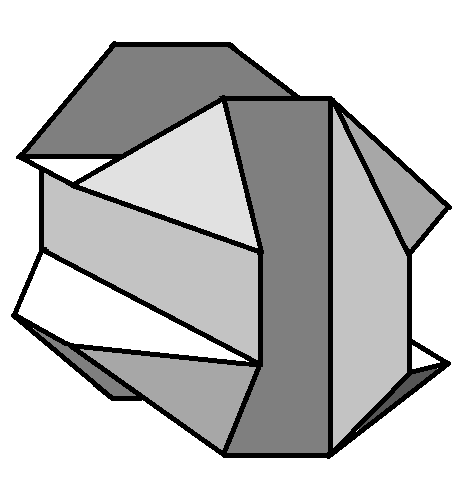}
    \includegraphics[width=0.4\textwidth]{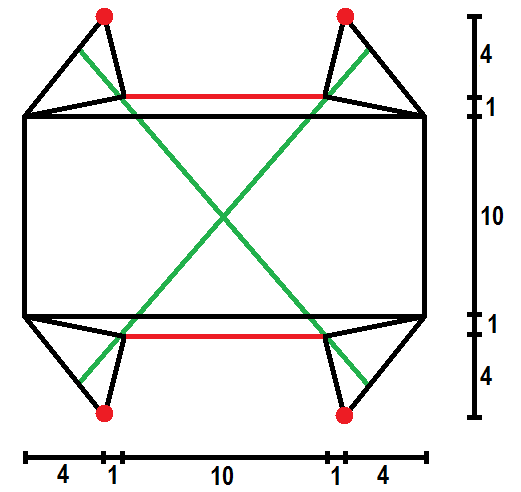}
    \caption{The \"Uberoctoplex and its orthogonal projection.  Red vertices are hidden from the center by the red rectangular faces in front and back. Other vertices are hidden by other rectangular faces in the same way.}
    \label{fig:uber}
\end{figure}

This polyhedron has 26 faces but only 24 vertices, however we were unable to prove this is minimal number of vertices or faces. Later we noticed that same partial truncation and cutting out corner can be done for usual Octoplex, resulting in a similar but less regular shape which still has the same property.

Recipe for The \"Uberoctoplex (figure \ref{fig:recipe}): Take a 20-by-20-by-20 cube, similarly to the Octoplex from each side cut out a prism, but with trapezoid as base. The trapezoid has height 4 and bases 12 and 10. Now, near each corner of cube there are 7 vertices including the corner itself, choose those three of them that are farther away from corner and make a cut with plane determined by them. Add oil, fry, season to taste.

\begin{figure}[h]
    \centering
    \includegraphics[width=\textwidth]{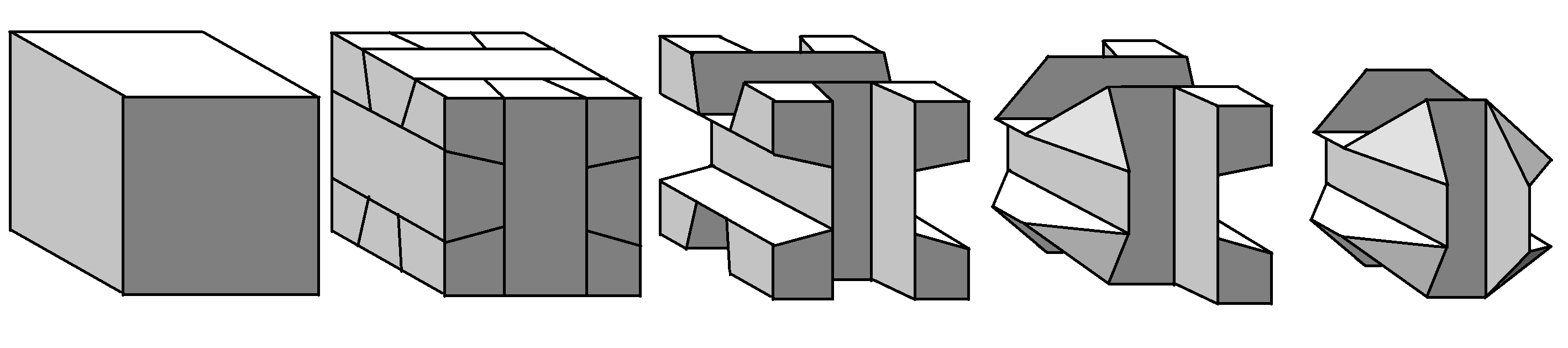}
    \caption{Stages of creating \"Uberoctoplex from cube.}
    \label{fig:recipe}
\end{figure}

Sadly, after some research we found that we weren't first to find \"Uberoctoplex, image depicting it can be found in internet dating back to 2007 \cite{IK}. We contacted I. Karonen, who was author of this image, and he said that he found this example on his own "as it's in some ways a fairly natural construction" (and we certainly agree) but he is unsure if anyone described it earlier.

As we weren't able to prove that this is minimal example of such polyhedron there are two questions that are still open:
\begin{que}
Is there any polyhedron that cannot be guarded by placing guards in every vertex with less than 24 vertices?
\end{que}
\begin{que}
Is there any polyhedron that cannot be guarded by placing guards in every vertex with less than 26 faces?
\end{que}

\bibliographystyle{unsrt}  


\end{document}